\documentclass[11pt,letterpaper]{article}

\usepackage[margin=1in]{geometry}

\usepackage{graphicx}
\usepackage{graphics}

\usepackage{amsmath}
\usepackage{amsfonts}
\usepackage{amssymb}
\usepackage{enumerate}
\usepackage{xcolor}
\usepackage{tikz}
\usepackage{textcomp} 
\definecolor{Green}{rgb}{0.0, 0.5, 0.0}
\usepackage{appendix}
\usepackage{mathtools}
\mathtoolsset{showonlyrefs=true}

\newtheorem{theorem}{Theorem}[section]

\newtheorem{definition}[theorem]{Definition}
\newtheorem{example}[theorem]{Example}

\newtheorem{lemma}[theorem]{Lemma}

\newtheorem{proposition}[theorem]{Proposition}

\newtheorem{remark}[theorem]{Remark}

\newenvironment{proof}[1][Proof]{\textbf{#1.} }{\ \rule{0.5em}{0.5em}}

\newcommand{\numbercellong}[2]

\usepackage{dirtytalk}

\usepackage[english]{babel}

\usepackage{amsmath}
\usepackage{graphicx}
\usepackage[colorlinks=true, allcolors=blue]{hyperref}

\title{Entropy Bounds for Local Coordination and Graph Amenability}
\author{Ron Peretz\thanks{Economics Department, Bar Ilan University, Israel. ron.peretz@biu.ac.il} \and Dean Kraizberg\thanks{School of Mathematical Sciences, Tel Aviv University, deank@mail.tau.ac.il}}
\date{\today}

\begin{document}
\maketitle
\renewcommand{\thefootnote}{\fnsymbol{footnote}}

\renewcommand{\thefootnote}{\arabic{footnote}}

\begin{abstract}
We study local pure coordination games on finite graphs. In these games, each vertex must choose one of two symmetric actions using only local information, and the cost is the average disagreement across edges. Hutchcroft, Rospuskova, and Tamuz \cite{TOOsqrtbound2026} showed that if such local coordination can be achieved with low cost, then the underlying graph must be amenable, or hyperfinite, but their quantitative bound has a square root loss.

We improve this loss in the unbiased binary setting. The main idea is to associate with each player's local output a probability measure that records, along an ordered list of information sources, the mutual information with that output. For binary outputs, two players who usually agree have nearby associated measures, with a bound given by the binary entropy of their disagreement probability. Combining this estimate with a grand coupling theorem yields an improved amenability bound of order \(\varepsilon\log(1/\varepsilon)\), where \(\varepsilon\) is the average disagreement. We also show that the square root loss in the earlier general theorem is essentially unavoidable for non-binary coordination profiles. Thus the binary assumption is not merely technical: it is what makes the improved entropy bound possible.
\end{abstract}
\noindent\textbf{Keywords:} Coordination Games, Information Games, Amenability, Social Networks. 

\section{Introduction}

Pure coordination problems involve a choice between alternatives that are a priori symmetric, where the only goal is to match the choices of others. Typical examples include choosing a convention, a standard, or a shared terminology. When agents are placed on a social network, the natural objective is local: each agent wants to coordinate with her neighbors. If all agents have access to a common global random signal, perfect coordination is immediate. The problem becomes substantially more interesting when information is local, so that each agent can base her action only on signals available within a bounded communication range.

Hutchcroft, Rospuskova, and Tamuz~\cite{TOOsqrtbound2026} recently showed that the possibility of efficient local coordination is governed by the geometry of the underlying graph. Their notion of amenability, better known in the math literature as hyperfiniteness, means roughly that the graph can be partitioned into bounded local communities after deleting only a small fraction of edges. Such a partition immediately gives a simple leader construction: each community follows a leader whose signal is available to all members, and disagreement occurs only across the deleted edges. Thus amenability is sufficient for efficient coordination.

The main result of Hutchcroft, Rospuskova, and Tamuz (HRT) is a converse: if local coordination is possible with low inefficiency, then the graph must be amenable. More precisely, in their radius \(r\) model, if the average inefficiency is at most \(\varepsilon\), then the graph is amenable with parameter of order \(\sqrt{\varepsilon}\). Their proof proceeds through a more general probabilistic statement. Given local random variables with small average squared disagreement along edges, they construct a leader profile whose cost is controlled by the square root of the original cost. The construction associates with each local output a Shapley influence distribution over the available information sources and then uses a grand coupling theorem to choose leaders consistently across adjacent vertices.

In this paper we improve the quantitative converse in the unbiased binary setting. We introduce the notion of an information skeleton, which separates the information constraints from the graph on which disagreements are measured. In the original radius \(r\) model, the information available to a vertex consists of the random sources in its radius \(r\) neighborhood. In an information skeleton, more generally, each player is assigned an arbitrary subset of independent information sources. This formulation generalizes the setting of Hutchcroft, Rospuskova, and Tamuz and makes the argument independent of the particular geometry used to define the information constraints.

Our main result shows that if the average binary disagreement is at most \(\varepsilon\), then the graph is amenable with parameter \(O(\varepsilon\log(1/\varepsilon))\). Thus, in the unbiased binary case, the square root loss in the previous converse can be replaced by an entropy loss. Equivalently, low cost local binary coordination forces a decomposition into leader based communities after removing only \(O(\varepsilon\log(1/\varepsilon))\) of the edges.

The proof keeps the coupling step from~\cite{TOOsqrtbound2026}, but replaces the Shapley influence distribution by a simpler information theoretic object. Fix an arbitrary ordering of the independent information sources. To an unbiased binary random variable \(X\), we associate the nondecreasing function that records how much mutual information about \(X\) is revealed by the first \(j\) sources in the ordering. The increments of this function form a probability measure on the set of information sources. If \(X\) is available to a player only through her allowed sources, then this measure is supported on those sources.

The key estimate is that, for two unbiased binary variables, the total variation distance between their associated information increment measures is bounded by the binary entropy of their disagreement probability. This follows directly from the chain rule for mutual information and the elementary fact that, for unbiased signs, the conditional entropy of one sign given the other is exactly the binary entropy of their disagreement probability.

We then apply the grand coupling theorem to the information increment measures associated with the players' outputs. The resulting random leaders are available to the corresponding players, and two neighboring players choose different leaders with probability controlled by the entropy of their disagreement probability. Averaging over edges and applying Jensen's inequality gives the main bound. 

We also show that the square root dependence in the general theorem of~\cite{TOOsqrtbound2026} is sharp for non-binary profiles, so the binary assumption is essential for the improvement.

\section{The model}
Hutchcroft, Rospuskova, and Tamuz consider an optimization problem on a graph $G=(N,E)$, where the players $N$ try to coordinate ex-ante random binary actions through communication limited to distance $r>0$. The graph and the parameter $r$, define both the communication constraint and the objective function, minimizing expected number of disagreements between neighbors. It is both natural and helpful to separate the constraints and the objective function. In particular, the graph structure itself does not play a role in the constraints imposed.

\medskip

Let $N=\{1,\dots,n\}$ be a set of players, and $K=\{1,\dots,k\}$ be a set of sources of information. Each player $i\in N$ has access to a subset of the information sources $B(i)\subset K$. We call such a triple $\mathfrak  S=(N,K,B\colon N\to2^{K})$ an \emph{information skeleton}. A \emph{coordination profile} adapted to $\mathfrak S$ is a family of random variables $\pi=(X_i,Z_j)_{i\in N,j\in K}$, where the $(Z_j)_{j\in K}$ are mutually independent, each $X_i$ is measurable with respect to $Z_{B(i)}:=(Z_j)_{j\in B(i)}$, $\mathbb E[X_i]=0$, and  $\mathbb E[X_i^2]=1$. We say that the profile is binary if each $X_i$ assumes values in $\{+1,-1\}$. A binary profile is called a \emph{leader profile} if $\operatorname{Cov}(X_i,X_j)\in\{0,1\}$ for every $i,j\in N$. Given a (simple undirected) graph structure on the players, a coordination profile $\pi$ incurs a discoordination cost $\mathtt C(G,\pi)$ defined by
\[
\mathtt C(G,\pi):=\frac 1{|E|}\sum_{ij\in E}\left(1-\operatorname{Cov}(X_i,X_j)\right)\ .
\]
The inefficiency of $G$ with respect to $\mathfrak S$ is defined as 
\[
\mathtt I(G,\mathfrak S):=\inf_{\substack{\pi \text{ adapted }\\\text{to }\mathfrak S}} \mathtt C(G
,\pi)\ .
\]
Similarly, the \emph{binary inefficiency} and \emph{leader inefficiency}, $\mathtt I_{\mathrm{bin}}(G,\mathfrak S)$ and $\mathtt I_{\mathrm{leader}}(G,\mathfrak S)$, are respectively defined with the infimum taken over binary and leader coordination profiles. Clearly,
\[
\mathtt I(G,\mathfrak S)\leq \mathtt I_{\mathrm{bin}}(G,\mathfrak S)\leq \mathtt I_{\mathrm{leader}}(G,\mathfrak S)\ .
\]
Leader inefficiency has the following equivalent combinatorial interpretation: it is the minimal proportion of edges that must be removed so that every connected component of the remaining graph has a common available information
source. Following \cite{TOOsqrtbound2026} we have the following definition.
\begin{definition}[\((\varepsilon,\mathfrak S)\)-amenability]
Let \(G=(N,E)\) be a finite graph and $\mathfrak S=(N,K,B)$ an information skeleton. We say that \(G\) is
\((\varepsilon,\mathfrak S)\)-amenable if there exists a partition of $N$ such that each part $C$ has $\bigcap_{i\in C}B(i)\neq\emptyset$, and such that the number of edges crossing between distinct parts is at most \(\varepsilon |E|\).
\end{definition}

The natural information skeleton used in~\cite{TOOsqrtbound2026} is associated with range $r>0$ communication in a social network $G=(N,E)$. It is defined by
\[
\mathfrak S(G,r):=(N,N,B_r)\, ,
\]
where $B_r(i):=\{j\in N:\mathrm{dist}_G(i,j)\leq r\}$ is the ball of radius $r$ around $i$. With tolerable abuse of notation we write 
\[
\mathtt I(G,r):=\mathtt I(G,\mathfrak S(G,r))
\]
and similarly for $\mathtt I_{\mathrm{bin}}$ and $\mathtt I_{\mathrm{leader}}$. With a similar notational abuse, a graph $G$ is called $(\varepsilon,r)$-amenable if it is $(\varepsilon,\mathfrak S(G,r))$-amenable, or equivalently, $\mathtt I_{\mathrm{leader}}(G,r)\leq \varepsilon$.

\begin{example}
    Let $C_n$ denote the cycle graph on the vertex set $\{1,\ldots,n\}$, where $\{i,j\}\in E$ if and only if $j = i+1 \quad \text{ or } \quad i-1 \mod(n)$. One may verify that $C_n$ is $(\frac 1 {2r+1} , r)$-amenable.
\end{example}

\section{Bounds of Amenability Parameter}

One of the main results in~\cite{TOOsqrtbound2026} is that low inefficiency in local coordination forces the underlying graph to be amenable. Theorem~2 in \cite{TOOsqrtbound2026} says that 
\[
\mathtt I_{\mathrm{leader}}(G,r)\leq \sqrt{8 \mathtt I_{\mathrm{bin}}(G,r)}\, ,
\]
for every finite graph $G$ and radius $r>0$. Their proof actually shows the stronger statement 
\begin{equation}\label{eq:HRT-general}
\cite[Theorem~3]{TOOsqrtbound2026}\quad
\mathtt I_{\mathrm{leader}}(G,\mathfrak S)\leq \sqrt{8 \mathtt I(G,\mathfrak S)}\, .    
\end{equation}
HRT pose the conjecture that the bound in~\cite[Theorem 2]{TOOsqrtbound2026} can be improved, perhaps even to $C\mathtt I_{\mathrm{bin}}(G,r)$. We nearly confirm their conjecture up to a logarithmic factor (see Theorem~\ref{log factor bound amenability}). We also show that the bound \eqref{eq:HRT-general} is tight up to a constant factor (see Proposition~\ref{sqrt is sharp in general}). Indeed, we first show that the square-root bound in \eqref{eq:HRT-general} is tight, in the sense that it cannot be improved asymptotically.

\begin{proposition}\label{sqrt is sharp in general}
There exists a universal constant \(c>0\) such that for every \(\varepsilon>0\) there exist a graph \(G\) and an integer \(r>0\) such that $\mathtt I(G,r)<\varepsilon$, yet 
\[
\mathtt I_{\mathrm{leader}}(G,r) \ge c \sqrt{\varepsilon}\, .
\]
\end{proposition}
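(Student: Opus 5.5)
The plan is to take $G$ to be a long cycle $C_n$ and to build a non-binary profile whose outputs are smoothly weighted local sums. This profile has discoordination of order $1/r^2$, while the leader inefficiency of the cycle stays of order $1/r$. Given $\varepsilon>0$, we then choose $r\approx C/\sqrt{\varepsilon}$ and $n>2r+1$ (say $n=10r$).

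\emph{Upper bound on $\mathtt I(C_n,r)$.} Let $(Z_j)_{j\in N}$ be i.i.d.\ standard Gaussians (Rademacher signs would work equally well). Fix weights $w_{-r},\dots,w_r$, with $w_m=0$ for $|m|>r$ and $\sum_m w_m^2=1$. Put
\[
X_i=\sum_{m=-r}^{r} w_m Z_{i+m}\qquad(\text{indices mod } n).
\]
Each $X_i$ is measurable with respect to $Z_{B_r(i)}$, and satisfies $\mathbb E X_i=0$ and $\mathbb E X_i^2=1$; here $n>2r+1$ ensures that the indices $i+m$ are distinct. Moreover,
\[
1-\operatorname{Cov}(X_i,X_{i+1})=1-\sum_m w_m w_{m+1}=\tfrac12\sum_m (w_{m+1}-w_m)^2 .
\]
With the flat kernel $w_m=(2r+1)^{-1/2}$ this is only $1/(2r+1)$, so no gain. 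Instead I would take a cosine kernel, $w_m=\alpha_r\cos\!\big(\pi m/(2r+2)\big)$ for $|m|\le r$. This kernel vanishes at $m=\pm(r+1)$, so there are no boundary jumps. Since $\alpha_r^2\asymp 1/r$ and each increment is $O(\alpha_r/r)$, the sum of squared increments is $O(r\cdot \alpha_r^2/r^2)=O(1/r^2)$. Hence $\mathtt I(C_n,r)\le \mathtt C(C_n,\pi)\le C_0/r^2$.

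\emph{Lower bound on $\mathtt I_{\mathrm{leader}}(C_n,r)$.} We use the combinatorial interpretation: remove a set $F$ of edges so that every component has a common source $j$, i.e.\ every vertex of the component lies in $B_r(j)$. Since $n>2r+1$, the ball $B_r(j)$ does not contain the whole cycle. Hence $F\neq\emptyset$, and the components are arcs. An arc contained in $B_r(j)$ has at most $2r+1$ vertices. The number of arcs equals $|F|$, so $|F|\ge n/(2r+1)$. As $|E|=n$, we get $\mathtt I_{\mathrm{leader}}(C_n,r)\ge 1/(2r+1)$.

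\emph{Conclusion.} Choose $r$ to be the least integer with $C_0/r^2<\varepsilon$. For small $\varepsilon$ this gives $r\le 2\sqrt{C_0/\varepsilon}$, and then
\[
\mathtt I_{\mathrm{leader}}\ge \frac{1}{2r+1}\ge c\sqrt{\varepsilon}.
\]
Large $\varepsilon$ is handled trivially by adjusting $c$: for $\varepsilon>C_0$ use $r=1$, where the bounds $\mathtt I_{\mathrm{leader}}\ge 1/3$ and $\mathtt I\le C_0<\varepsilon$ hold, and take $c\le 1/(3\sqrt{C_0})$ so that this also covers $\varepsilon\in(C_0,1]$. The main obstacle, though a mild one, is verifying the $O(1/r^2)$ estimate for the cosine kernel. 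The key points are the normalization $\alpha_r^2\sim 1/(r+1)$ and the bound $|w_{m+1}-w_m|\le \alpha_r\pi/(2r+2)$ from the mean value theorem, including at the endpoints where $w$ drops to $0$. This is what shows that smooth, non-binary profiles beat every leader profile by a square-root factor.
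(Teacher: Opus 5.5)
Your proposal is correct and uses the same construction as the paper. Both take a long cycle and set each output to a normalized local linear combination of independent sources, with a kernel that vanishes at distance $r+1$; this gives discoordination $O(1/r^2)$, while the leader inefficiency is at least $1/(2r+1)$. The differences are cosmetic: the paper uses the triangular kernel $r+1-|j|$ with Rademacher signs to get $3/r^2$, you use a cosine kernel (for which $\alpha_r^2=1/(r+1)$ exactly and $C_0=\pi^2/4$ works), and your arc count and choice of $r$ spell out what the paper leaves as ``order $n/r$''.
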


\medskip
\noindent\textbf{Proof sketch.} Consider the cycle graph $C_n$ and denote $d(i,j)=\mathrm{dist}_{C_n}(i,j)$. We set $Z_i$ independent unbiased sign random variables, and define $X_i=\sum_{j\in B_r(i)}{c_{d(i,j)}Z_j}$ with $\sum_{d=0}^rc_{d}^2=1$ and $c_d$ decreasing linearly in $d$. We show that the inefficiency will then be \(\frac3{r^2}\), while the amenability factor of the graph is \(\frac{1}{2r+1}\). The full proof is given in section~\ref{sec:additional proofs}.

\medskip

Proposition~\ref{sqrt is sharp in general} shows that if one wishes to improve the amenability parameter, they must restrict the range of profiles, e.g., to binary profiles\footnote{In fact, the same entropy argument we present in Section~3 applies whenever the variables take values in a fixed finite set \(A\subset\mathbb R\), with constants depending on \(A\).}. Our main result is the following.
\begin{theorem}\label{log factor bound amenability}
For every graph \(G=(N,E)\) and every information skeleton  $\mathfrak S=(N,K,B)$,  
\[
\mathtt I_{\mathrm{leader}}(G,\mathfrak S)\leq 2h\left(\tfrac 1 2\mathtt I_{\mathrm{bin}}(G,\mathfrak S)\right)\, ,
\]
where $h(p):=-p\log_2 p-(1-p)\log_2(1-p)$ denote the binary entropy function\footnote{with the convention \(0\log 0=0\).}.
\end{theorem}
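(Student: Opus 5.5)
Fix a binary profile $\pi=(X_i,Z_j)$ adapted to $\mathfrak S$, and set $\varepsilon_{ij}:=\Pr[X_i\neq X_j]=\tfrac12(1-\operatorname{Cov}(X_i,X_j))$. Then $\mathtt C(G,\pi)=\frac{2}{|E|}\sum_{ij\in E}\varepsilon_{ij}$. It suffices to build a leader profile $\pi'$ adapted to $\mathfrak S$ with
\[
\mathtt C(G,\pi')\le \frac{2}{|E|}\sum_{ij\in E}h(\varepsilon_{ij}).
\]
Indeed, since $h$ is concave, Jensen's inequality bounds the right-hand side by $2h\bigl(\frac1{|E|}\sum\varepsilon_{ij}\bigr)=2h(\tfrac12\mathtt C(G,\pi))$. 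The function $h$ is increasing on $[0,\tfrac12]$, and we may assume $\mathtt C(G,\pi)\le 1$, since otherwise the bound $2h(\cdot)$ is trivially large. So taking the infimum over $\pi$ gives the theorem. One caveat: when $\mathtt C(G,\pi)>1$ one needs $2h(\cdot)\ge 1$, which holds only if $h\ge\tfrac12$. This is fine because $\mathtt I_{\mathrm{leader}}\le 1$ trivially? Leader cost per edge is $0$ or $2$, so some care is needed. I will instead use that $h(\tfrac12 t)$ for $t\ge 1$ gives $2h\ge 2h(1/2\cdot 1)=2h(1/2)\cdot$... I expect the normalization to work out, and will check the $\varepsilon_{ij}>\tfrac12$ case separately by replacing $h$ with its monotone hull if necessary.

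\emph{Information measures.} Order $K=\{1,\dots,k\}$. For each $i$, define $F_i(j):=I(X_i;Z_1,\dots,Z_j)$. This is nondecreasing, with $F_i(0)=0$ and $F_i(k)=H(X_i)=1$, because $X_i$ is an unbiased sign determined by $Z$. Its increments $\mu_i(j):=F_i(j)-F_i(j-1)=I(X_i;Z_j\mid Z_{<j})$ form a probability measure on $K$. If $j\notin B(i)$, then $X_i$ is a function of $Z_{B(i)\cap\{<j\}}$ and $Z_{B(i)\cap\{>j\}}$, which are independent of $Z_j$ given $Z_{<j}$. Hence $\mu_i(j)=0$, so $\mu_i$ is supported on $B(i)$.

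\emph{Key estimate.} I will show $\|\mu_i-\mu_{i'}\|_{TV}\le h(\varepsilon_{ii'})$, i.e. $\sum_j|\mu_i(j)-\mu_{i'}(j)|\le 2h(\varepsilon_{ii'})$. Let $Y=X_iX_{i'}$, which is Bernoulli with $H(Y)=h(\varepsilon_{ii'})$. Consider $g_j:=I(X_i,Y;Z_j\mid Z_{<j})$. By the chain rule,
\[
g_j=\mu_i(j)+I(Y;Z_j\mid Z_{<j},X_i),
\]
and the second terms are nonnegative and sum to $I(Y;Z\mid X_i)\le H(Y\mid X_i)\le h(\varepsilon_{ii'})$. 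The pair $(X_i,Y)$ determines $X_{i'}$ and vice versa, so the same $g_j$ also equals $\mu_{i'}(j)$ plus a nonnegative term with total at most $h(\varepsilon_{ii'})$. Therefore
\[
\sum_j|\mu_i(j)-\mu_{i'}(j)|\le\sum_j(g_j-\mu_i(j))+\sum_j(g_j-\mu_{i'}(j))\le 2h(\varepsilon_{ii'}),
\]
which is the claimed TV bound.

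\emph{Coupling and leader profile.} Apply the grand coupling theorem of \cite{TOOsqrtbound2026}. It gives random leaders $L_i\sim\mu_i$, jointly coupled, with $\Pr[L_i\ne L_{i'}]\le 2\|\mu_i-\mu_{i'}\|_{TV}$ (or a similar constant), and $L_i\in B(i)$. Take the coupling randomness $W$ independent of everything. Define $X'_i:=\sigma_{L_i}$, where the $\sigma_j$ are independent fair signs. This is a leader profile adapted to $\mathfrak S$: its covariances, computed conditionally on $W$, are $0$ or $1$. It can be realized as a mixture over $W$, and the infimum is at most the average over $W$, so some realization achieves cost at most the average. Its cost on edge $ii'$ is at most $2\Pr[L_i\neq L_{i'}]$.

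The main obstacle is matching constants. The grand coupling factor times $h$ must yield exactly $2h(\tfrac12\mathtt I_{\mathrm{bin}})$ after the normalization above. I expect the coupling theorem to give $\Pr[L_i\ne L_{i'}]\le 2d_{TV}/(1+d_{TV})\le 2d_{TV}$, and this needs checking against the factor $2$ in the cost.
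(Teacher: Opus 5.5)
Your route is the paper's route. You use the same measures $\mu_i=\mathrm d f_{X_i}$ built from $j\mapsto I(X_i;Z_{[j]})$. Your proof of $\|\mu_i-\mu_{i'}\|_{\operatorname{TV}}\le h(\varepsilon_{ii'})$ is the same chain-rule and triangle-inequality argument; your pair $(X_i,X_iX_{i'})$ generates the same $\sigma$-algebra as the paper's pair $(X,Y)$. You then use the grand coupling, Jensen, and the choice of a good realization, as the paper does. Your argument that $\mu_i$ is supported on $B(i)$ is more explicit than the paper's, which just asserts it. However, the constant, which is the whole content of the statement, is not closed as you wrote it. The cause is a miscount, not a genuine obstacle.

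\textbf{The miscount.} You claim the leader cost per edge is $0$ or $2$, and so bound the cost of edge $ii'$ by $2\Pr[L_i\neq L_{i'}]$. Combined with the coupling bound, this only yields $\mathtt I_{\mathrm{leader}}\le 4h(\tfrac12\mathtt I_{\mathrm{bin}})$. In a leader profile $\operatorname{Cov}(X'_i,X'_{i'})\in\{0,1\}$, so $1-\operatorname{Cov}(X'_i,X'_{i'})\in\{0,1\}$. Concretely, fix a realization of the leaders: two distinct independent fair signs have covariance $0$, so the cost of edge $ii'$ is exactly $\mathbf 1[L_i\neq L_{i'}]$. Averaging over the coupling, the expected cost is $\frac1{|E|}\sum\Pr[L_i\neq L_{i'}]\le\frac{2}{|E|}\sum h(\varepsilon_{ij})$. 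This is precisely the target you set in your first paragraph, and the factor $2$ in $2h$ is just the factor $2$ from the coupling theorem.

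\textbf{The worry about $\mathtt C(G,\pi)>1$ is unnecessary.} Your per-edge lemma holds for every $\varepsilon\in[0,1]$, and $h$ is concave on all of $[0,1]$. So $\mathtt I_{\mathrm{leader}}\le 2h(\tfrac12\mathtt C(G,\pi))$ holds for every binary $\pi$. For example, if $X_{i'}=-X_i$ the two measures coincide, consistent with $h(1)=0$. Your remark that $2h$ is then ``trivially large'' is false, but it does not matter. To pass to the infimum you only need continuity of $h$ along a minimizing sequence, as in the paper. No monotonicity of $h$ and no monotone hull are needed.
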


\begin{remark}
Since
\(
h(\varepsilon)\le2\varepsilon\log_2\frac1\varepsilon\) for $\varepsilon \in (0,\frac{1}{2}]$, we get that  
\[
\mathtt I_{\mathrm{leader}}(G,\mathfrak S)\leq 2\mathtt I_{\mathrm{bin}}(G,\mathfrak S)\log_2(2/\mathtt I_{\mathrm{bin}}(G,\mathfrak S))\, ,
\]
for every graph \(G\) and information skeleton $\mathfrak S$.
Namely, whenever $I_{\mathrm{bin}}(G , \mathfrak S )\le \varepsilon$, the graph \(G\) is \((2\varepsilon \log(2/ \varepsilon),\mathfrak S)\)-amenable. This improves the square root bound given in (\cite{TOOsqrtbound2026},Theorem~3).
\end{remark}  

Our proof of Theorem~\ref{log factor bound amenability} keeps the same general architecture as the proof of Theorem~3 in~\cite{TOOsqrtbound2026}, which proceeds in two main steps. First, given a coordination profile $\pi=(X_i,Z_j)_{i\in N,j\in K}$, they associate to each player \(i\in N\) a monotone transferable utility coalitional game (TU game) $v_i$ depending only on $X_i,Z_{B(i)}$. Specifically ,
\begin{equation}\label{eq:vi-definition}
    v_i(T):=\operatorname{Var}(\mathbb E[X_i\mid Z_{T}])\, .
\end{equation} The Shapley value of $v_i$ is a probability measure
\begin{equation}\label{eq:Shapley value measure definition}
\mu_i:=\mathrm{Sh}(v_i)    
\end{equation}
supported on \(B(i)\). They prove a
contraction estimate saying that for any two players 
\begin{equation} \label{TV_bound_sqrt}
    \|\mu _i-\mu_j\|_{\operatorname{TV}} \le c\sqrt{(1-\operatorname{Cov}(X_i,X_j))}\, ,
\end{equation}
with $c=\sqrt{2}$.

Second, they use a grand coupling theorem (see~\cite{AngelSpinka2019}) to couple the measures
\((\mu_i)_{i\in N}\), producing random leaders \(L_i\in B(i)\). Players
with the same leader form communities, and the probability that an edge crosses
between communities is controlled by the total variation distance between the
corresponding probability measures.

\medskip

In the present argument we replace the variance-based games $v_i$ by information-theoretic CDFs 
\[
f_i(j):= I(X_i;Z_1,\ldots,Z_j)\, , 
\]
where $I(X;Y)$ is the mutual information function. Then we set $\mu_i=\mathrm df_i$.

\medskip

We establish a similar result (perhaps with worse constants) while working under the original framework as in~\cite{TOOsqrtbound2026}. Namely, using the same TU variance based TU games and the same Shapley value measures, we show that the total variation distance between two measures is proportional to the probability of disagreement up to a logarithmic factor.
In particular, this provides a strengthening of \eqref{TV_bound_sqrt}, where the proportion is shown to be of a square-root order. The proof of the following Proposition is presented in section~\ref{sec:var_game_TV_bound}.
\begin{proposition}\label{prop:varaince games}
For every binary coordination profile $(X_i,Z_j)_{i\in N,j\in K}$, and any two players $i,j\in N$ with $\varepsilon_{ij}:=1-\operatorname{Cov}(X_i,X_j)\leq \frac 1 2$, the Shapley value measures $\mu_i,\mu_j$ of the variance games satisfy
\[
\|\mu_i-\mu_j\|_{\operatorname{TV}} \le C\,\varepsilon_{ij}\log(1/\varepsilon_{ij})\, ,
\]
for some universal constant $C>0$.
\end{proposition}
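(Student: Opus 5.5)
The plan is to reduce the Shapley-value comparison to a comparison along a single fixed ordering of the sources, and then to control that comparison with a martingale argument that exploits the binary structure.

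\textbf{Step 1: reduce to a fixed ordering.} By the random-order formula, $\mu_i(k)=\mathbb E_\sigma[v_i(P^\sigma_k\cup\{k\})-v_i(P^\sigma_k)]$, where $\sigma$ is a uniformly random ordering of $K$ and $P^\sigma_k$ is the set of predecessors of $k$. By the triangle inequality,
\[
\|\mu_i-\mu_j\|_{\operatorname{TV}}\le \mathbb E_\sigma\Bigl[\tfrac12\sum_{t}|a_t-b_t|\Bigr].
\]
It therefore suffices to fix an ordering, say the identity, with filtration $\mathcal F_t=\sigma(Z_1,\dots,Z_t)$, and prove $\sum_t|a_t-b_t|\le C\varepsilon\log(1/\varepsilon)$. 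Here
\[
M_t=\mathbb E[X_i\mid\mathcal F_t],\qquad N_t=\mathbb E[X_j\mid\mathcal F_t],\qquad a_t=\mathbb E[(\Delta M_t)^2],\qquad b_t=\mathbb E[(\Delta N_t)^2].
\]
The identity $a_t=v_i([t])-v_i([t-1])$ holds by orthogonality of martingale increments.

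\textbf{Step 2: expand the difference.} Write $D=X_i-X_j$ and $D_t=\mathbb E[D\mid\mathcal F_t]$, so that $\Delta N_t=\Delta M_t-\Delta D_t$. Then
\[
|a_t-b_t|=\bigl|\mathbb E[\Delta D_t(2\Delta M_t-\Delta D_t)]\bigr|\le \mathbb E[(\Delta D_t)^2]+2\,\mathbb E|\Delta D_t\,\Delta M_t|.
\]
The first term sums to at most $\mathbb E[D^2]=2\varepsilon_{ij}$. A plain Cauchy--Schwarz bound on the second term only recovers the $\sqrt{\varepsilon}$ bound of \eqref{TV_bound_sqrt}, so the binary assumption must enter here.

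\textbf{Step 3: use the binary structure.} Let $A=\{X_i\neq X_j\}$, so $P(A)=p=\varepsilon_{ij}/2$, and let $q_t=P(A\mid\mathcal F_t)$. Since $D=2X_i\mathbf 1_A$, we have $|D_t|\le 2q_t$; moreover $D_t$ is controlled by the joint martingale $(q_t,\mathbb E[X_i\mathbf 1_A\mid\mathcal F_t])$. Apply a weighted Cauchy--Schwarz with weight $w_t=q_{t-1}$ (suitably regularized, e.g.\ $q_{t-1}+p$):
\[
\sum_t\mathbb E|\Delta D_t\Delta M_t|\le\Bigl(\sum_t\mathbb E\Bigl[\tfrac{(\Delta D_t)^2}{w_t}\Bigr]\Bigr)^{1/2}\Bigl(\sum_t\mathbb E\bigl[w_t(\Delta M_t)^2\bigr]\Bigr)^{1/2}.
\]
\begin{itemize}
\item The first factor is an entropy-type quantity. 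Increments of the convex function $x\log x$ along the martingale satisfy $\mathbb E[\Delta(q\log q)_t]\gtrsim\mathbb E[(\Delta q_t)^2/q_{t-1}]$ for bounded increments. This telescopes to at most $\mathbb E[q_T\log q_T]-p\log p\le p\log(1/p)$, and the same should hold for $D_t$, whose increments are dominated by those of the $A$-martingale.
\item For the second factor, use $|\Delta M_t|\le 2$ and the martingale identity $\mathbb E[q_{t-1}(\Delta M_t)^2]=\mathbb E[q_{t-1}(\mathbb E[M_t^2\mid\mathcal F_{t-1}]-M_{t-1}^2)]$. Summing by parts gives a bound of the form $C\,\mathbb E[q_T\cdot\text{(total quadratic variation)}]\lesssim C\,p$, up to a logarithm.
\end{itemize}
Multiplying the two factors gives $\sqrt{p\log(1/p)}\cdot\sqrt{Cp\log(1/p)}\lesssim C\varepsilon_{ij}\log(1/\varepsilon_{ij})$.

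\textbf{Main obstacle.} I expect the hardest step to be the second factor, i.e.\ showing $\sum_t\mathbb E[q_{t-1}(\Delta M_t)^2]=O(p\log(1/p))$. The weight is only a conditional probability of $A$, not its indicator, so one cannot simply localize to $A$. My first attempt would be summation by parts, moving the weight onto the terminal value $q_T=\mathbf 1_A$ or $\mathbb E[\mathbf 1_A\mid Z_K]$. If that fails, I would fall back on a stopping-time truncation at $\tau=\inf\{t:q_t>\lambda\}$ and handle the post-$\tau$ part by Doob's inequality, $P(\sup q\ge\lambda)\le p/\lambda$, optimizing over $\lambda\sim p$ up to log factors.
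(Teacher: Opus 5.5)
\textbf{There is a genuine gap in Step~3.} Your Step~1 is a valid reduction, and the per-ordering target $\sum_t|a_t-b_t|\le C\varepsilon_{ij}\log(1/\varepsilon_{ij})$ is true. Step~3 cannot prove it, though: weighted Cauchy--Schwarz with a predictable weight loses exactly the square root you are trying to remove.

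The inequality you invoke for the first factor is false as stated. The Taylor remainder of $x\log x$ only gives $\mathbb E[\Delta(q\log q)_t]\ge\frac12\mathbb E[(\Delta q_t)^2/\max(q_{t-1},q_t)]$. Nothing prevents a single source from moving $q$ from $p$ to $\mathbf 1_A$, and then the left side is $p\log(1/p)$ while $\mathbb E[(\Delta q_t)^2/q_{t-1}]=1-p$.

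Here is a concrete counterexample. Let one source be $(\xi,\eta)$, with $\xi$ an unbiased sign and $\eta$ an independent sign with $\mathbb P(\eta=-1)=p$. Set $X_i=\xi$ and $X_j=\xi\eta$, so $\varepsilon_{ij}=2p$.
\begin{itemize}
\item At the step $s$ where this source is revealed, $\Delta D_s=2\xi\mathbf 1_A$ and $\Delta M_s=\xi$, and all other increments vanish.
\item With $w_s=q_{s-1}+p=2p$, your first factor equals $2$, not $O(p\log(1/p))$.
\item Worse, $w_s$ is $\mathcal F_{s-1}$-measurable and hence independent of $(\xi,\eta)$. So for \emph{every} predictable weight the product of your two factors equals $\sqrt{4p\,\mathbb E[1/w_s]\,\mathbb E[w_s]}\ge2\sqrt p$, whereas $\sum_t\mathbb E|\Delta D_t\Delta M_t|=2p$.
\end{itemize}
No regularization of $q_{t-1}$ helps, and the scheme returns only the $\sqrt{\varepsilon}$ bound of \eqref{TV_bound_sqrt}.

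Two further steps are unsupported.
\begin{itemize}
\item The increments of $D_t$ are not dominated by those of $q_t$; only $|D_t|\le2q_t$ holds. For $X_i=Z_1$, $X_j=-Z_1Z_2$ with $\mathbb P(Z_2=1)=p$, in the order $(1,2)$ one has $\Delta q_1=0$ but $\Delta D_1=2pZ_1$.
\item Your second factor, after moving the predictable weight to the terminal time, is $\mathbb E[\mathbf 1_A\langle M\rangle_T]$ plus at most $p$. It is $O(p\log(1/p))$ only because $\langle M\rangle_T$ is exponentially integrable for a bounded martingale. Your phrase ``up to a logarithm'' hides precisely the John--Nirenberg-type input that produces the logarithm.
\end{itemize}

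\textbf{The missing idea} is to keep the cancellation you discard in Step~2 rather than split by Cauchy--Schwarz. Since $a_t-b_t=\mathbb E[\Delta D_t\,\Delta(M+N)_t]$, orthogonality of martingale increments and $\mathbb E D=0$ give, for every $\theta\in[-1,1]^k$,
\[
\sum_t\theta_t(a_t-b_t)=\mathbb E\bigl[D\cdot T_\theta(X_i+X_j)\bigr],\qquad T_\theta u:=\sum_t\theta_t\bigl(\mathbb E[u\mid\mathcal F_t]-\mathbb E[u\mid\mathcal F_{t-1}]\bigr).
\]
Because $|D|=2\mathbf 1_A$ and $\|T_\theta u\|_{\mathrm{BMO}_2}\le2$ whenever $|u|\le1$, John--Nirenberg gives $\mathbb E\,e^{c|T_\theta u|}\le C$. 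Hence $\mathbb E[\mathbf 1_A|T_\theta u|]\le Cp(1+\log(1/p))$, and combined with your Step~1 this proves the proposition.

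This is essentially the paper's proof. The paper writes $\sum_i\theta_i\mu_f(i)=\langle f,M_\theta f\rangle$ via the Fourier formula for the Shapley value, represents $M_\theta=\mathbb E_\pi T_{\theta,\pi}$, and transfers the exponential bound to $M_\theta u$ by Jensen. Your random-order reduction instead lets you apply John--Nirenberg to each $T_{\theta,\pi}$ directly.

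If you want to keep the Cauchy--Schwarz structure, it can be repaired with the non-predictable weight $w_t=q_{t-1}+q_t$. The first factor is then controlled by the Bregman increments of $x\log x$ and of the jointly convex function $r^2/q$, with $r_t=\mathbb E[X_i\mathbf 1_A\mid\mathcal F_t]$. The second factor becomes $\mathbb E[\mathbf 1_A(\langle M\rangle_T+[M]_T)]$, which again needs exponential integrability of the square functions.
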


Lastly, we show for any fixed radius and bounded degree the relation between efficiency and amenability is linear.
\begin{proposition}
\label{prop:C(r,Delta)}
For every integers $r,\Delta>0$ there exists $C(r,\Delta)>0$ such that for every graph of maximal degree at most $\Delta$, we have
\[
\mathtt I_{\mathrm{leader}}(G,r)\leq C(r,\Delta)\mathtt I(G,r)\, .
\]
\end{proposition}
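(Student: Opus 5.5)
The plan is to compare both sides of the inequality component by component. First observe that \(\mathtt I_{\mathrm{leader}}(G,r)\) is at most the proportion of edges lying in \emph{bad} components, meaning connected components \(H\) of \(G\) with \(\bigcap_{i\in H}B_r(i)=\emptyset\). To see this, keep every good component as a single part, and split every bad component into singletons, which costs all of its edges. Since a bad component has at most \(\Delta|V(H)|\) edges, it suffices to prove the following. Fix any profile \(\pi\) adapted to \(\mathfrak S(G,r)\) and write \(\varepsilon_e=1-\operatorname{Cov}(X_i,X_j)\) for \(e=ij\). Then for every vertex \(i\) of a bad component, the edges within distance \(R:=2r+1\) of \(i\) satisfy
\[
\sum_{e\subset B_R(i)}\varepsilon_e\ \ge\ c(r,\Delta)>0 .
\]
Granting this, sum over vertices of bad components. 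Each edge is counted at most \(\Delta^{R+1}\) times, so \(\sum_e\varepsilon_e\ge c\,\Delta^{-R-1}\#\{\text{vertices in bad components}\}\). This yields \(\mathtt C(G,\pi)\ge c\,\Delta^{-R-2}\cdot\#\{\text{edges in bad components}\}/|E|\), and hence the linear bound.

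Throughout I will use the \(L^2\) metric \(\|X_i-X_j\|_2=\sqrt{2\varepsilon_{ij}}\) and the triangle inequality along paths. The lower bound at a vertex \(i\) of a bad component splits into two cases.

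\emph{Case 1: the component contains a vertex at distance exactly \(R\) from \(i\).} Take a shortest path \(i=v_0,\dots,v_R\). Then \(B_r(v_0)\cap B_r(v_R)=\emptyset\), so \(X_{v_0}\) and \(X_{v_R}\) are independent with mean zero, giving \(\|X_{v_0}-X_{v_R}\|_2=\sqrt2\). The triangle inequality gives \(\sum_{t}\sqrt{\varepsilon_{v_tv_{t+1}}}\ge 1\), and Cauchy--Schwarz then gives \(\sum_t\varepsilon_{v_tv_{t+1}}\ge 1/R\). All these edges lie in \(B_R(i)\).

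\emph{Case 2: the whole component \(H\) lies in \(B_R(i)\).} Then \(|V(H)|\le\Delta^{R+1}\), and I need a quantitative version of the statement that zero cost forces a common source. The key tool is the following fact. If \(X_a\) is \(Z_{A}\)-measurable, \(X_b\) is \(Z_{A'}\)-measurable, and \(\|X_a-X_b\|_2\le\eta\), then \(\|X_a-\mathbb E[X_a\mid Z_{A\cap A'}]\|_2\le\eta\). This holds because, by independence of the \(Z\)'s, \(\mathbb E[X_a\mid Z_{A'}]=\mathbb E[X_a\mid Z_{A\cap A'}]\), while \(\mathbb E[X_a\mid Z_{A'}]\) is the best \(Z_{A'}\)-measurable approximation to \(X_a\) and therefore beats \(X_b\). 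Now fix a spanning tree of \(H\) and let \(S:=\sum_{e\in E(H)}\sqrt{2\varepsilon_e}\). Every \(X_v\) is then within \(S\) of \(X_i\). I would apply the fact repeatedly to \(Y_0=X_i\): successively condition on \(Z_{B_r(i)\cap B_r(v_1)}\), then on \(Z_{B_r(i)\cap B_r(v_1)\cap B_r(v_2)}\), and so on through all vertices of \(H\).

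At each step, conditioning on the intersection of one more ball is a projection. I would bound the incurred error by comparing with the nearby variable \(X_{v_t}\), possibly after first projecting \(X_{v_t}\) onto the previous intersection. This is the step I expect to be the main obstacle: the errors must be tracked so that they stay \(O(S)\) per step. The tool is that conditional expectations onto \(\sigma\)-algebras generated by subfamilies of independent \(Z\)'s commute. After \(|V(H)|\) steps, \(X_i\) is within \(C(r,\Delta)\,S\) of \(\mathbb E[X_i\mid Z_\emptyset]=0\). Since \(\|X_i\|_2=1\), this gives \(S\ge 1/C(r,\Delta)\). Because \(|E(H)|\le\Delta^{R+2}\), Cauchy--Schwarz then gives \(\sum_{e\in E(H)}\varepsilon_e\ge c(r,\Delta)\).

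Taking \(c(r,\Delta)\) to be the minimum of \(1/R\) and the Case 2 constant completes the local lower bound, and therefore the proposition with \(C(r,\Delta)=\Delta^{R+2}/c(r,\Delta)\).
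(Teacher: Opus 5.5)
Your proof is correct, and it takes a genuinely different route from the paper's. The paper cuts according to the profile. It deletes every edge with $1-\operatorname{Cov}(X_i,X_j)\ge\tau$, which is at most $\mathtt C(G,\pi)|E|/\tau$ edges by Markov. It then uses the same independence-along-a-shortest-path argument as your Case~1 to show that every surviving component lies in some $B_{2r}(i_0)$, so it has fewer than $M<\Delta^{2r+1}$ vertices. Finally it applies \cite[Theorem~3]{TOOsqrtbound2026} as a black box on each component: the leader cost there is at most $\sqrt{8\tau}<1/M^2$, which forces zero disagreements, i.e.\ a common source. You never cut according to the profile. Instead you bound $\mathtt I_{\mathrm{leader}}$ crudely by the fraction of edges in bad components of $G$ itself, and show that the $R$-ball around every vertex of a bad component carries cost at least $c(r,\Delta)$. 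In the bounded-size case you replace the HRT theorem by an elementary projection argument. The step you flagged as the main obstacle closes with exactly the tool you named. Write $P_A$ for conditional expectation given $Z_A$ and $Q_v:=P_{B_r(v)}$. Then $Y_{t-1}-Y_t=P_{A_{t-1}}(I-Q_{v_t})X_i$, so $\|Y_{t-1}-Y_t\|_2\le\|(I-Q_{v_t})X_i\|_2\le\|X_i-X_{v_t}\|_2\le S$. Since $\prod_{v\in H}Q_v=P_{\emptyset}$ is the expectation when $\bigcap_{v\in H}B_r(v)=\emptyset$, telescoping gives $1=\|X_i\|_2\le|V(H)|\,S$. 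The paper's route buys brevity, given HRT, and a partition adapted to the given profile. Yours buys a self-contained proof that needs neither the Shapley-value measures nor the grand coupling. It uses only the $L^2$ structure, so it covers non-binary profiles as the statement requires. It also makes transparent why the proposition is soft for fixed $r,\Delta$: a bad component forces a constant cost per edge, so cutting all of its edges loses only a constant factor. Both arguments give constants of the form $\Delta^{O(r)}$. One bookkeeping point: $|B_R(a)|\le\Delta^{R+1}$ needs $\Delta\ge2$, but for $\Delta=1$ every component is good and the claim is trivial.
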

The proof of Proposition~\ref{prop:C(r,Delta)} is in Section~\ref{sec:additional proofs}.

\section{Proof of Theorem~\ref{log factor bound amenability}}

We first recall the relevant information-theoretic notation (for a detailed exposition of the topic, we refer the reader to~\cite{InformationTheoryBook},\cite{InfoOnNetworks}). Let \(X,Y,Z\) be random variables, the
mutual information of $X,Y$ is defined as
\[
I(X;Y):=H(X)-H(X\mid Y).
\]
The conditional mutual information of $X,Y$ conditioned on $Z$ is defined as
\[
I(X;Y\mid Z):=H(X\mid Z)-H(X\mid Y,Z).
\]
We also recall the chain rule
\[
I((X,Y);Z)
=
I(X;Z)+I(Y;Z\mid X),
\]
and the non-negativity property
\[
I(X;Y\mid Z)\ge 0.
\]

Let $G=(N,E)$ be a finite graph and $\mathfrak S=(N,K,B)$ be an information skeleton. We fix an ordering on $K = \{1,...,k\}$. Let $\pi=(X_i,Z_j)_{i\in N,j\in K}$ be a binary coordination profile, and denote $Z_{[j]}:= \{Z_1,\dots,Z_j\}$. 
For a random variable $X$ measurable w.r.t.\ $Z_1,\ldots,Z_k$ define a function 

\[
f_X\colon\{0,\dots,k\} \to [0,1], \quad f_X(j):= I(X;Z_{[j]}).
\]
Notice that $f_X$ is monotone, with \(f_X(0)=0\), \(f_X(k)=H(X)\), so $\mathrm df_X$ is a measure on $K$ defined by 
\[
\mathrm df_X(\{j\})= f_X(j)-f_X(j-1)\, .
\]
For $i\in N$, let $\mathrm d f_i:=\mathrm{d}f_{X_i}$, which is a probability measure with $\operatorname{supp}(\mathrm d f_i)\subseteq B(i)$.

\medskip

We recall that the total variation norm of a signed measure $\mu$ on a set $K$ is given by

\[
||\mu||_{\operatorname{TV}}
= \frac12
\sum_{j\in K} \bigl|\mu(\{j\})\bigr|\, .
\]
In the following Lemma, we show that the total variation distance of the probability measures is bounded by the binary entropy function.

\begin{lemma}\label{lem:ordered-information-tv}
Let \(X,Y\) be unbiased sign random variables measurable with respect to \(Z_1,\dots,Z_k\) with $p=\mathbb{P}(X\neq Y)$. Then 
\[
\|\mathrm d f_X-\mathrm d f_Y\|_{\operatorname{TV}}
\le
h(p).
\]
\end{lemma}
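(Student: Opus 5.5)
We compare the increments of the two measures one source at a time, using the chain rule with the pair $(X,Y)$ as the common reference. For each $j\in K$ write
\[
a_j:=\mathrm df_X(\{j\})=I(X;Z_j\mid Z_{[j-1]}),\qquad b_j:=\mathrm df_Y(\{j\})=I(Y;Z_j\mid Z_{[j-1]}),
\]
where the identification of the increments with conditional mutual informations is the chain rule $I(X;Z_{[j]})=I(X;Z_{[j-1]})+I(X;Z_j\mid Z_{[j-1]})$. Also set $c_j:=I((X,Y);Z_j\mid Z_{[j-1]})$. Applying the chain rule to the pair in two orders gives
\[
c_j=a_j+I(Y;Z_j\mid X,Z_{[j-1]})=b_j+I(X;Z_j\mid Y,Z_{[j-1]}).
\]
By non-negativity of conditional mutual information, $0\le c_j-a_j$ and $0\le c_j-b_j$. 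Hence
\[
|a_j-b_j|\le (c_j-a_j)+(c_j-b_j).
\]

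Next we sum over $j$, which telescopes the chain rule again. Since $\sum_j c_j=I((X,Y);Z_{[k]})=H(X,Y)$ and $\sum_j a_j=H(X)$, $\sum_j b_j=H(Y)$ (both $X$ and $Y$ are determined by $Z_{[k]}$), we obtain
\[
\sum_j(c_j-a_j)=H(X,Y)-H(X)=H(Y\mid X),\qquad \sum_j(c_j-b_j)=H(X\mid Y).
\]
Therefore
\[
2\|\mathrm df_X-\mathrm df_Y\|_{\operatorname{TV}}=\sum_j|a_j-b_j|\le H(Y\mid X)+H(X\mid Y).
\]

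It remains to identify these conditional entropies. Since $X$ and $Y$ are unbiased signs, $H(X)=H(Y)=1$, so $H(Y\mid X)=H(X\mid Y)=H(X,Y)-1$. Write $W:=XY$, so that $\mathbb P(W=-1)=p$. The pair $(X,Y)$ is in bijection with $(X,W)$, and therefore $H(Y\mid X)=H(W\mid X)\le H(W)=h(p)$. This already gives $\|\mathrm df_X-\mathrm df_Y\|_{\operatorname{TV}}\le \tfrac12\cdot 2h(p)=h(p)$, which is the claim.

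For the exact equality mentioned in the introduction, one checks that $X$ and $W$ are independent. By symmetry $\mathbb P(X=1,W=-1)=\mathbb P(X=-1,W=-1)=p/2$, which holds since $\mathbb P(X=1,Y=-1)=\mathbb P(X=-1,Y=1)$ follows from $\mathbb E X=\mathbb E Y=0$. Equality is not needed for the bound, since $H(W\mid X)\le H(W)$ suffices.

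The main obstacle is the termwise inequality $|a_j-b_j|\le (c_j-a_j)+(c_j-b_j)$. Bounding $|a_j-b_j|$ directly is hopeless, and the trick is to introduce the joint increment $c_j$, which dominates both $a_j$ and $b_j$ and whose total mass exceeds each of theirs by exactly a conditional entropy. Once this is in place, the rest is bookkeeping with the chain rule.
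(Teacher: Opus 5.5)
Your proof is correct and is essentially the paper's argument: summing the termwise inequality $|a_j-b_j|\le (c_j-a_j)+(c_j-b_j)$ over $j$ is exactly the paper's triangle inequality through $\mathrm df_{(X,Y)}$, combined with the monotonicity of $f_{(X,Y)}-f_X$ and $f_{(X,Y)}-f_Y$, and it yields the same bound $\tfrac12\bigl(H(Y\mid X)+H(X\mid Y)\bigr)$. The only cosmetic difference is that you bound $H(Y\mid X)=H(XY\mid X)\le h(p)$ via $W=XY$, whereas the paper computes the conditional law directly and gets equality.
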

\begin{proof}
Consider the random variable $(X,Y)$. By the chain rule,
\[
I(X,Y; Z_{[j]})
=
I(X;Z_{[j]})+I(Y;Z_{[j]}\mid X).
\]
Hence the function
\(
(f_{(X,Y)}-f_{X})\,(j)
=
I(Y;Z_{[j]}\mid X)
\) is monotone and, therefore, 
\[
\|\mathrm d f_{(X,Y)}-\mathrm d f_X\|_{\operatorname{TV}}=\tfrac 1 2 (f_{(X,Y)}- f_X)\,(k) =\tfrac 1 2H(Y\mid X).
\]
Similarly,
\[
\|\mathrm d f_{(X,Y)}-\mathrm d f_Y\|_{\operatorname{TV}}=\tfrac 1 2H(X\mid Y).
\]
It follows that 
\[
\| \mathrm d f_X-\mathrm d f_Y\|_{\operatorname{TV}}\leq \|\mathrm d f_{(X,Y)}-\mathrm d f_X\|_{\operatorname{TV}}+\|\mathrm d f_{(X,Y)}-\mathrm d f_Y\|_{\operatorname{TV}}=\tfrac 1 2(H(Y\mid X)+H(X\mid Y)).
\]

It remains to compute the two conditional entropies. Recall \(X\) and \(Y\) are both unbiased sign variables and \(p=\mathbb{P}[X\neq Y]\), thus, conditional on \(Y\), the variable \(X\) differs from \(Y\) with
probability \(p\). Hence
\[
H(X\mid Y)=h(p), \quad \text{and similarly} \quad H(Y\mid X)=h(p).
\]
Therefore
\[
\| \mathrm d f_X-\mathrm d f_Y\|_{\operatorname{TV}}\le
h(p)\, ,
\]
as wanted.
\end{proof}

\medskip 
Using the Lemma above we are ready to prove Theorem~\ref{log factor bound amenability}---this part follows the same route as in the second step in the proof of~\cite[Theorem~3]{TOOsqrtbound2026}.

\medskip

\begin{proof}[Proof of Theorem~\ref{log factor bound amenability}]
For every edge \(\{i,j\}\in E\), set
\[
p_{ij}:=\mathbb{P}[X_i\neq X_j].
\]
Applying Lemma~\ref{lem:ordered-information-tv} to \(X_i\) and \(X_j\), we get
\[
\|\mu_i - \mu_j\|_{\operatorname{TV}}\le h(p_{ij}).
\]

As in~\cite{TOOsqrtbound2026}, we use the following coupling theorem.
\begin{theorem}[\cite{KleinbergTardos2002,AngelSpinka2019}]\label{coupling}
Let \((\mu_i)_{i\in N}\) be a finite or countable family of probability measures on a common
finite set \(K\). Then there exists a coupling of random variables $(L_i)_{i\in N}$ such that $L_i\sim \mu_i$ for every \(i\), and for every pair \(i,j\in N\),
\[
\mathbb{P}[L_i\neq L_j]
\le
\frac{2\|\mu_i - \mu_j\|_{\operatorname{TV}}}
{1+\|\mu_i - \mu_j\|_{\operatorname{TV}}}
\le
2\|\mu_i - \mu_j\|_{\operatorname{TV}}.
\]
\end{theorem}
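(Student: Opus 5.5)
The plan is to prove Theorem~\ref{coupling} with a single shared source of randomness: rejection sampling driven by one common sequence of proposals. Let $(J_t,U_t)_{t\ge1}$ be i.i.d., with $J_t$ uniform on $K$ and $U_t$ uniform on $[0,1]$, independent of each other. For each $i\in N$ set
\[
\tau_i:=\min\{t\ge 1: U_t<\mu_i(\{J_t\})\},\qquad L_i:=J_{\tau_i}.
\]
All players use the same sequence $(J_t,U_t)$. This defines every $L_i$ simultaneously on one probability space, whether the family $(\mu_i)$ is finite or countable.

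The first step is to check the marginals. Each round is an acceptance for $i$ with probability $\frac1k\sum_{j}\mu_i(\{j\})=\frac1k>0$, so $\tau_i<\infty$ almost surely. Since rounds are i.i.d., the accepted proposal has law proportional to $\frac1k\mu_i(\{j\})$, which is exactly $\mu_i$. Hence $L_i\sim\mu_i$.

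The second step, which is the heart of the argument, bounds the disagreement probability for a fixed pair $i,j$. Write $d=\|\mu_i-\mu_j\|_{\operatorname{TV}}$. Then
\[
\sum_{x}\max(\mu_i,\mu_j)(\{x\})=1+d,\qquad \sum_x\min(\mu_i,\mu_j)(\{x\})=1-d.
\]
Let $\sigma=\min(\tau_i,\tau_j)$ be the first round in which $U_t<\max(\mu_i,\mu_j)(\{J_t\})$. By the i.i.d. structure, the point $(J_\sigma,U_\sigma)$ is uniformly distributed on the region $\{(x,u):u<\max(\mu_i,\mu_j)(\{x\})\}$ with respect to counting measure times Lebesgue measure. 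It therefore falls under $\min(\mu_i,\mu_j)$ with probability $\frac{1-d}{1+d}$. On that event both players accept in round $\sigma$, so $\tau_i=\tau_j$ and $L_i=L_j$. This gives
\[
\mathbb P[L_i\neq L_j]\le 1-\frac{1-d}{1+d}=\frac{2d}{1+d}\le 2d.
\]

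The only delicate point is the distributional claim about the first acceptance round $\sigma$. It follows from the standard fact about i.i.d. trials: conditioned on a round succeeding, the proposal has the conditional law, independently of which round it was. I would state this once and use it for both the marginal computation and the pair bound.
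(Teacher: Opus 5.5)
Your proof is correct: the shared rejection-sampling scheme gives the right marginals, and the identities $\sum_x\max(\mu_i,\mu_j)(\{x\})=1+d$ and $\sum_x\min(\mu_i,\mu_j)(\{x\})=1-d$ give exactly $\mathbb{P}[L_i\neq L_j]\le 2d/(1+d)\le 2d$. The paper does not prove this theorem but only cites it, and your construction is essentially the argument of the cited sources: it is the Kleinberg--Tardos ``random label, random threshold'' rounding, which is the finite, discrete-time form of the Angel--Spinka Poisson-process coupling.
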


By Theorem~\ref{coupling} and since $\operatorname{supp}(\mu_i)\subseteq B(i)$, there exist random variables $(L_i)_{i\in N}$ such that \(L_i\in  B(i)\) for every \(i\in N\), and for every edge \(\{i,j\}\in E\),
\[
\mathbb{P}[L_i\neq L_j]\le 2h(p_{ij}).
\]

Averaging over edges gives
\[
\frac1{|E|}
\sum_{\{i,j\}\in E}
\mathbb{P}[L_i\neq L_j]
\le
\frac2{|E|}
\sum_{\{i,j\}\in E}h(p_{ij}).
\]
The binary entropy function \(h\) is concave on \([0,1]\). Hence Jensen's inequality gives
\[
\frac2{|E|}
\sum_{\{i,j\}\in E}h(p_{ij})
\le
2h\left(
\frac1{|E|}
\sum_{\{i,j\}\in E}p_{ij}
\right).
\]
Since
\[
\frac1{|E|}
\sum_{\{i,j\}\in E}p_{ij}=\frac 1 2\, \mathtt C(G,\pi)\, ,
\]
we obtain
\[
\frac1{|E|}
\sum_{\{i,j\}\in E}
\mathbb{P}[L_i\neq L_j]
\le
2h\left(\tfrac 1 2\, \mathtt C(G,\pi)\right).
\]
Take a realization of $\{L_i\}_{i\in N}$, such that
\[
\frac1{|E|}
\sum_{\{i,j\}\in E}
\mathbf 1_{\{L_i\neq L_j\}}
\le
2h\left(\tfrac 1 2\, \mathtt C(G,\pi)\right).
\]
This defines a leader coordination profile $\pi_{\mathrm{leader}}=(X_i^{\mathrm{leader}},Z_j^{\mathrm{leader}})_{i\in N,j\in K}$ where \(Z^{\mathrm{leader}}_{j}\) are independent unbiased signs, and \(X^{\mathrm{leader}}_i = Z^{\mathrm{leader}}_{L_i}\), so we conclude 
\[
\mathtt I_{\mathrm{leader}}(G,\mathfrak S)\leq 2h\left(\tfrac 1 2\, \mathtt C(G,\pi)\right).
\]
This holds for every binary coordination profile $\pi$, then by continuity of $h$,
\[
\mathtt I_{\mathrm{leader}}(G,\mathfrak S)\leq 2h\left(\tfrac 12\,\mathtt I_{\mathrm{bin}}(G,\mathfrak S)\,\right).
\]
\end{proof}

\begin{remark}\label{remark: info game bound is tight}
    The improvement of Theorem~\ref{log factor bound amenability} lies in the improve bound established in Lemma~\ref{lem:ordered-information-tv}. We remark that this bound is sharp. Indeed, fix some $p>0$ and let $Z_1$ be unbiased sign random variable, and let $Z_2$ be a biased sign random variable taking the value $1$ with probability $p$. Set $X = Z_1 , Y = -Z_1 \cdot Z_2$. Notice that 
  
\[
    f_{X}(j) = \begin{cases}
    0,&j=0\\
    1,&j=1\\
    1,&j=2
    \end{cases} \qquad 
    f_{Y}(j) = \begin{cases}
    0,&j=0\\
    1-h(p),&j=1\\
    1,&j=2
    \end{cases}
\]
Thus, one may verify that $\mathrm d f_X= (1,0) , \mathrm d f_Y = (1-h(p), h(p))$, and $\|\mathrm d f_X-\mathrm d f_Y\|_{\operatorname{TV}} = h(p)$. This shows the sharpness, as $\mathbb{P}(X\neq Y)=p$.
\end{remark}

\section{Proof of Proposition~\ref{prop:varaince games}} 
\label{sec:var_game_TV_bound}

Let $Z_1,\ldots ,Z_k$ be independent random variables, and let $f\ (=X_i)$ and $g\ (=X_j)$ be unbiased binary random variables measurable with respect to \(\sigma\langle Z_i: i = 1,\ldots,k \rangle\). Recall the definition of the Shapley value measures $\mu_f\ (=\mu_i)$ and $\mu_g\ (=\mu_j)$ presented in \eqref{eq:vi-definition} and \eqref{eq:Shapley value measure definition}. Since $2\mathbb P(f\neq g)= 1-\operatorname{Cov}(f,g)\ (=\varepsilon_{ij})$, it is sufficient to show that there exists a universal constant $C>0$ such that
\[
\|\mu_f-\mu_g\|_{\operatorname{TV}}
\le
C\,\mathbb{P}(f\neq g)\left(1+\log\frac{1}{\mathbb{P}(f\neq g)}\right)\, .
\]
\begin{proof}
Following~\cite{TOOsqrtbound2026}, we work in the separable subspace of \(L^2(Z_K)\) generated by the random variables under consideration. Let \(L_i\) be the closed subspace of mean-zero, \(\sigma(Z_i)\)-measurable elements of this space, and let \(\mathcal U_i\) be a finite or countable orthonormal basis of \(L_i\). Since the variables \(Z_i\) are independent, the collection of products
\[
U_S=\prod_{i\in S}U_i,
\qquad U_i\in\mathcal U_i,
\]
where \(\emptyset \neq S\subseteq[k]\), forms an orthonormal basis for the mean-zero subspace of \(L^2(Z_K)\). For each nonempty \(S\subseteq[k]\), denote by \(\mathcal U_S\) the collection of all such products \(U_S\).

Thus we may write
\(
f
=
\sum_{\emptyset\neq S\subseteq[k]}
\sum_{U_S\in\mathcal U_S}
\widehat f(U_S)U_S,
\)
where
\(
\widehat f(U_S):=\langle f,U_S\rangle.
\)

\medskip

Attributing to \cite{Owen2014,OwenPrieur2017}, HRT show that the Shapley-value measure is given explicitly by
\[
\mu_f(i)
=
\sum_{\substack{S\subseteq[k]\\ i\in S}}
\sum_{U_S\in\mathcal U_S}
\frac{\widehat f(U_S)^2}{|S|}\,.
\]

\medskip

Recall that the total variation distance is given by
\[
\|\mu_f-\mu_g\|_{\operatorname{TV}}
=
\frac12
\sum_{i=1}^k \bigl|\mu_f(i)-\mu_g(i)\bigr|
=
\frac12
\sup_{\theta_i\in[-1,1]}
\sum_{i=1}^k \theta_i\bigl(\mu_f(i)-\mu_g(i)\bigr).
\]
Now fix \(\theta=(\theta_1,\dots,\theta_k)\in[-1,1]^k\), and define the operator\footnote{with the convention \(M_\theta 1=0\).}
\[
M_\theta U_S
=
m_\theta(S)U_S,
\text{ where } \
m_\theta(S):=
\frac1{|S|}
\sum_{i\in S}\theta_i
\qquad(S\neq\emptyset).
\]
Notice that \( \sum_i\theta_i\mu_f(i)= \langle f,M_\theta f\rangle,\) which gives us
\[
\sum_i\theta_i\bigl(\mu_f(i)-\mu_g(i)\bigr)
=
\langle f,M_\theta f\rangle-\langle g,M_\theta g\rangle.
\]
Putting \( \varphi:=f-g \),
and noticing the operator \(M_\theta\) is self-adjoint, one may verify
\[
\langle f,M_\theta f\rangle-\langle g,M_\theta g\rangle
=
\langle \varphi,M_\theta f\rangle+\langle g,M_\theta \varphi\rangle
=
\langle \varphi,M_\theta f\rangle+\langle \varphi,M_\theta g\rangle.
\]

Thus we now wish to bound the expressions of the form $|\langle \varphi,M_\theta u\rangle|$ where \(u\) is a $\{1,-1\}$-valued random variable.

\medskip

We prove the following auxiliary lemma.

\begin{lemma}
There exists a universal constant \(C>0\) such that for every random variable \(u\in L^2(Z_K)\) with \(|u|\le1\), every \(\theta\in[-1,1]^k\), and every event \(A\),
\[
\int_A |M_\theta u|\,d\mu
\le
C\,\mu(A)\left(1+\log\frac{1}{\mu(A)}\right),
\]
where \(\mu\) denotes the joint law of \(Z_K=(Z_1,\ldots,Z_k)\).
\end{lemma}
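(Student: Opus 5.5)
The plan is to prove an exponential-integrability estimate for $M_\theta u$ and then convert it into the stated bound by entropy duality. Concretely, I aim to show that there are universal constants $\lambda,C_0>0$ such that
\[
\int e^{\lambda |M_\theta u|}\,d\mu\le C_0
\qquad\text{for all } |u|\le 1,\ \theta\in[-1,1]^k .
\]
Granting this, the lemma follows from the Young-type inequality $ab\le e^{a}+b\log b - b$, applied with $a=\lambda|M_\theta u|$ and $b=1/\mu(A)$ and integrated over $A$. This gives
\[
\lambda\int_A|M_\theta u|\,d\mu\le C_0+\log\tfrac1{\mu(A)}-1 ,
\]
after which one multiplies by $\mu(A)$ in the normalized form. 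Equivalently, one uses $\int_A|F|\le \mu(A)\,\|F\|_{\exp L}\log(e/\mu(A))$ together with the duality between the Orlicz spaces $\exp L$ and $L\log L$. An equivalent target is a moment bound $\|M_\theta u\|_q\le Cq$ for all $q\ge1$.

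To control $M_\theta u$ I use an integral representation through product noise operators. For $\rho\in[0,1]^k$ let $T_\rho$ replace each $Z_i$ independently by a fresh copy with probability $1-\rho_i$. Then $T_\rho U_S=\prod_{i\in S}\rho_i\,U_S$, and $T_\rho$ is a Markov operator, so $|T_\rho u|\le1$. Take the multiplicative path $\rho_i(s)=t^{1+s\theta_i}$, which stays in $[0,1]^k$ for $|s|<1$. Differentiating at $s=0$ gives
\[
\partial_s T_{\rho(s)}U_S\big|_{s=0}=(\log t)\, t^{|S|}\Bigl(\sum_{i\in S}\theta_i\Bigr)U_S .
\]
Using $1/|S|=\int_0^1 t^{|S|-1}\,dt$, this yields
\[
M_\theta u=\int_0^1 \frac{1}{t\log t}\,\partial_s T_{\rho(s)}u\big|_{s=0}\,dt .
\]
The factor $1/(t\log t)$ behaves like $1/(1-t)$ near $t=1$ and is integrable near $0$ once the derivative is shown to carry a factor of $t$. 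A pointwise bound $|\partial_sT_{\rho(s)}u|\lesssim |\log t|$ on the bulk range would give $|M_\theta u|\lesssim\log\frac1{1-t_0}$ from the part $t\le t_0$. This logarithmic divergence is exactly the source of the $\log(1/\mu(A))$.

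I then split the $t$-integral at $1-\delta$ with $\delta=\mu(A)$. The part $t\le 1-\delta$ should contribute $\lesssim\mu(A)\log(1/\mu(A))$ pointwise-times-$\mu(A)$. For the part $t\in(1-\delta,1]$, I use the $L^2$ bound $\|\cdot\|_2\le\|u\|_2\le1$ with Cauchy–Schwarz, or better an $L^1$ bound, hoping it contributes $O(\mu(A))$. If this bookkeeping is awkward, I instead run the same representation with $t$ replaced by $e^{-\tau}$ and estimate moments directly.

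The main obstacle is a dimension-free bound on the directional derivative $\partial_s T_{\rho(s)}u$. By the chain rule it equals
\[
\sum_i \theta_i\,\rho_i\log t\; T_{\rho,-i}D_iu ,
\]
with $D_iu=u-\mathbb E[u\mid Z_{-i}]$. Each summand is bounded by $2$, but there are $k$ of them, so the naive bound grows with $k$. A Russo-type formula writes the derivative as $\mathbb E\bigl[u(Z')\sum_i\theta_i(B_i-\rho_i)/(\rho_i(1-\rho_i))\bigr]$ over the Bernoulli selectors $B_i$. This also appears to lose a factor $\sqrt k$ pointwise.

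I would first try to exploit the multiplicative parametrization more cleverly. For $\theta\ge0$, moving $s$ only decreases every $\rho_i$, and I would try a monotonicity or convexity argument in $s$ to bound a one-sided derivative by a difference quotient $|T_{\rho(s)}u-T_{\rho(0)}u|/|s|\le 2/|s|$. General $\theta$ would be handled by splitting $\theta=\theta^+-\theta^-$. If that fails, I would fall back on proving only the moment/exponential bound, via a martingale (Burkholder-type) decomposition of $M_\theta u$ along the coordinates, where the $k$ terms become martingale differences and do not add up linearly.
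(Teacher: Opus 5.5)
\paragraph{A gap: the central exponential estimate.}
Your last step is fine. From $\int e^{\lambda|M_\theta u|}\,d\mu\le C_0$, Young's inequality integrated over $A$ gives $\frac{\lambda}{\mu(A)}\int_A|M_\theta u|\,d\mu\le C_0-1+\log\frac1{\mu(A)}$; your display is missing the factor $1/\mu(A)$. This is an equivalent substitute for the paper's rearrangement and Hardy--Littlewood step. The gap is the exponential bound itself, and the route you propose for it cannot work.

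The pointwise estimate $|\partial_sT_{\rho(s)}u|\lesssim|\log t|$, uniform over $|u|\le1$, is false, not merely hard. Take $\theta=(1,\dots,1)$, so that $\partial_sT_{\rho(s)}u|_{s=0}=(t\log t)\frac{d}{dt}T_tu$, and work on $\{\pm1\}^k$ with uniform signs. Fix a point $x$ and let $Y$ be its $T_t$-noisy copy, with independent coordinates and $\mathbb P(Y_i=x_i)=\frac{1+t}{2}$. Differentiating the density of $Y$ in $t$ gives $\frac{d}{dt}T_tu(x)=\frac{1}{1-t^2}\mathbb E\bigl[u(Y)\bigl(\sum_ix_iY_i-kt\bigr)\bigr]$. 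Choosing $u(y)=\operatorname{sign}\bigl(\sum_ix_iy_i-kt\bigr)$ makes this $\frac{1}{1-t^2}\mathbb E\bigl|\sum_ix_iY_i-kt\bigr|$, which is of order $\sqrt{k/(1-t^2)}$. So the $\sqrt k$ you saw in the Russo formula is a real loss, not an artifact. Convexity on $s\in[-1,1]$ would force $|\partial_sT_{\rho(s)}u|\le 2$, so the same example also refutes your convexity idea. This happens even though here $M_\theta u=u-\mathbb E u$ is bounded by $2$.

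The piece $t\in(1-\delta,1]$ of your integral is exactly $M_\theta(u-T_{1-\delta}u)$, and $|u-T_{1-\delta}u|\le 2$. Cauchy--Schwarz with the $L^2$ bound only gives $\int_A|\cdot|\le 2\sqrt{\mu(A)}$, which is precisely the square-root loss the lemma is meant to remove. An $L^1$ bound says nothing about $\int_A$ at all. Doing better on this piece is the original problem again.

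\paragraph{The missing idea.}
What is missing is the step that makes martingale methods applicable. Your fallback, as phrased (a decomposition of $M_\theta u$ along the coordinates), does not work directly. $M_\theta$ is not a martingale transform along any single ordering, because its multiplier $m_\theta(S)$ depends on all of $S$, not only on the last coordinate revealed.

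The paper proceeds as follows:
\begin{itemize}
\item Fix a permutation $\pi$ and take the Doob martingale $u^\pi_j=\mathbb E[u\mid Z_{\pi(1)},\dots,Z_{\pi(j)}]$ with its transform $T_{\theta,\pi}u=\sum_j\theta_{\pi(j)}d^\pi_j$.
\item One has $T_{\theta,\pi}U_S=\theta_iU_S$, where $i$ is the last element of $S$ revealed by $\pi$. Averaging over a uniformly random $\pi$ therefore gives exactly $\mathbb E_\pi T_{\theta,\pi}u=M_\theta u$.
\item Since $|u|\le1$ and $|\theta_i|\le1$, each $T_{\theta,\pi}u$ has $\mathrm{BMO}_2$ norm at most $2$. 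The martingale John--Nirenberg inequality then yields $\int e^{c|T_{\theta,\pi}u|}\,d\mu\le C$, uniformly in $\pi$, $\theta$ and $k$.
\item Jensen's inequality for the convex function $z\mapsto e^{c|z|}$ transfers this bound to $M_\theta u$.
\end{itemize}
In your moment language, Burkholder's inequality gives $\|T_{\theta,\pi}u\|_q\le Cq$, and Minkowski's inequality over $\pi$ gives $\|M_\theta u\|_q\le Cq$. With this in place, your Young-inequality step completes the proof.
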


\begin{proof}[Proof of the lemma]
Fix a permutation \(\pi\) of \([k]\). We reveal the coordinates according to the order of $\pi$, and define the filtration
$\mathcal F_j^\pi
:=
\sigma(Z_{\pi(1)},\dots,Z_{\pi(j)}).$
Define the Doob martingale
\[
u_j^\pi
:=
\mathbb E[u\mid \mathcal F_j^\pi].
\]
we denote the differences
$d_j^\pi
:=
u_j^\pi-u_{j-1}^\pi$, and the martingale transform~\cite{martingaleTransforms}
\[
T_{\theta,\pi}u
:=
\sum_{j=1}^k \theta_{\pi(j)}d_j^\pi.
\]

We recall the definition of the martingale \(\mathrm{BMO}_2\)-norm
\begin{definition}
Let $M$ be a martingale with filtration $(\mathcal{F}_j)$. Denote $M_{\tau}$ to be the value of $M$ at stopping time $\tau$, and $M_{\infty}$ to be its terminal value. The $\mathrm{BMO}_2$ norm is defined as  
\[\|M\|_{\mathrm{BMO}_2}
:=
\sup_{\tau}
\left\|
\mathbb E(\left[
|M_\infty-M_\tau|^2
\mid \mathcal F_\tau
\right])^{1/2}
\right\|_\infty \] where the supremum is taken over all stopping times.
\end{definition}

Notice that $T_{\theta,\pi}u$ has BMO$_2$-norm bounded by~$2$. Indeed, for every stopping time \(\tau\) we have
\[
\mathbb E\left[
\left(\sum_{j>\tau} \theta_{\pi(j)}d_j^\pi\right)^2
\mid \mathcal F_\tau^\pi
\right]
=
\mathbb E\left[
\sum_{j>\tau} \theta_{\pi(j)}^2(d_j^\pi)^2
\mid \mathcal F_\tau^\pi
\right]+2 \mathbb E\left[
\sum_{m>j>\tau} \theta_{\pi(j)}\theta_{\pi(m)}d_j^\pi d_m^\pi
\mid \mathcal F_\tau^\pi
\right]
\underset{(*)}{=}
\]
\[
\mathbb E\left[
\sum_{j>\tau} \theta_{\pi(j)}^2(d_j^\pi)^2
\mid \mathcal F_\tau^\pi
\right]
\le
\mathbb E\left[
\sum_{j>\tau}(d_j^\pi)^2
\mid \mathcal F_\tau^\pi
\right] \underset{(*)}{=} \mathbb E\left[
\sum_{j>\tau}(d_j^\pi)^2
\mid \mathcal F_\tau^\pi
\right]
+2 \mathbb E\left[
\sum_{m>j>\tau} d_j^\pi d_m^\pi
\mid \mathcal F_\tau^\pi
\right]
=
\]
\[
\mathbb E\left[
(u_k^\pi-u_\tau^\pi)^2
\mid \mathcal F_\tau^\pi
\right].
\]
Where the equalities marked $(*)$ follows from the fact that $\mathbb E[d_j^\pi d_m^\pi\mid \mathcal F_j]
=
d_j^\pi\,\mathbb E[d_m^\pi \mid \mathcal F_j]
=
0$. Indeed, $\mathcal F _{j}\subseteq \mathcal F_{m-1}$ and $\mathbb E [d_m^{\pi}| \mathcal{F}_{m-1}] = \mathbb E [u_m^{\pi}-u_{m-1}^{\pi}| \mathcal{F}_{m-1}] = u_{m-1}^{\pi} - u_{m-1}^{\pi} = 0$, which in turn implies that $\mathbb E[d_j^\pi d_m^\pi\mid \mathcal F_\tau]
= \mathbb E [ \mathbb E[d_j^\pi d_m^\pi\mid \mathcal F_j]\mid \mathcal{F}_\tau]=0$. 

\medskip

Since \(|u|\le1\), we have $|u_k^\pi-u_\tau^\pi|\le2$, and therefore $\|T_{\theta,\pi}u\|_{\mathrm{BMO}_2}\le2.$

\medskip 

By the martingale John--Nirenberg inequality (see~\cite{discreteJohnNirenberg}), given a martingale $M$ with bounded BMO$_2$-norm, there exist constants
\(c,C>0\) such that for any stopping time $\tau$ we have
\[
\mathbb E [ \, \exp (c|M_\infty -M_T|)| \mathcal F _T] \le C
\]
Taking $\tau = 0$, we get $\int \exp\bigl(c|T_{\theta,\pi}u|\bigr)\,d\mu
\le C.$

\medskip 
we notice that
\[
\mathbb E[U_S\mid \mathcal F_j^\pi]
=
\begin{cases}
0, & S\not\subseteq\{\pi(1),\dots,\pi(j)\},\\
U_S, & S\subseteq\{\pi(1),\dots,\pi(j)\}.
\end{cases}
\]
Indeed, if some coordinate in \(S\) has not yet been revealed, then the corresponding factor has mean zero and is independent of \(\mathcal F_j^\pi\). If all coordinates in \(S\) have been revealed, then \(U_S\) is \(\mathcal F_j^\pi\)-measurable. Thus, if \(i\in S\) is the last coordinate in \(S\) that is revealed by \(\pi\), then the element \(d_{\pi^{-1}(i)}^\pi\) (with respect to \(U_S\)) is the only nonzero martingale difference, so
\[
T_{\theta,\pi}U_S
=
\theta_i U_S.
\]
Yet under a uniformly chosen random permutation, each \(i\in S\) is equally likely to be the last coordinate of \(S\) which is revealed. Therefore, taking expectation with respect to the uniform choice of \(\pi\), we get
\[
\mathbb E_\pi[T_{\theta,\pi}U_S]
=
\frac1{|S|}
\sum_{i\in S}\theta_i U_S
=
M_\theta U_S.
\]
Then by linearity we deduce that for every finite linear combination of basis elements \(u\) we have
\[
M_\theta u
=
\mathbb E_\pi[T_{\theta,\pi}u].
\]
For a general \(u\in L^2(Z_K)\) with \(|u|\le1\), this identity follows by approximating \(u\) in \(L^2\) by finite linear combinations of basis elements. Both \(M_\theta\) and \(T_{\theta,\pi}\) are uniformly bounded in \(L^2\)-norm\footnote{Indeed, $\|M_\theta u\|_2^2=  \sum_{S,U_S}|m_\theta(S)|^2|\widehat u(U_S)|^2
\le \sum_{S,U_S}|\widehat u(U_S)|^2=\|u\|_2^2$, and for the transform we have $\|T_{\theta,\pi}u\|_2^2 = \sum_{j=1}^k\theta_{\pi(j)}^2\|d_j^\pi\|_2^2 \le\sum_{j=1}^k\|d_j^\pi\|_2^2=\|\sum_{j=1}^k d_j^\pi\|_2^2=\|u-\mathbb E u\|_2^2\le\|u\|_2^2.$}, so the identity passes to the limit in \(L^2\). Hence \(M_\theta u=\mathbb E_\pi[T_{\theta,\pi}u]\) for every \(u\in L^2(Z_K)\). We now apply the estimates below directly to the original bounded function \(u\).

\medskip

One may verify the convexity of the function
\(
z\mapsto \exp(c|z|),
\) then for each point in the underlying probability space, Jensen's inequality gives
\[
\exp\bigl(c|M_\theta u|\bigr)
=
\exp\left(c\left|\mathbb E_\pi[T_{\theta,\pi}u]\right|\right)
\le
\mathbb E_\pi
\exp\bigl(c|T_{\theta,\pi}u|\bigr).
\]
Taking expectation and using the previously established inequality from the John--Nirenberg inequality yields
\[
\int \exp\bigl(c|M_\theta u|\bigr)\,d\mu
\le C.
\]

Denote $F:=|M_\theta u|$, treating it as a measurable function. We follow standard notation and denote by \(F^*\) the decreasing rearrangement of
\(|F|\). That is, 
\[
F^*(s)
:=
\inf\{t\ge0:\mu(F>t)\le s\}, \quad 0<s\le1
\]
We notice that from the previously established bound for every $t \ge 0$, 
\[
\exp(ct) \mu(F>t)=\int \exp(ct) \mathbf 1_{F>t} d\mu \le\int \exp\bigl(cF\bigr)\,d\mu
\le C \Rightarrow \mu(F>t) \le C \exp(-ct)
\]
Which by the definition of \(F^*\), implies
$F^*(s)\le \frac{1}{c} \log \frac{C}{s} 
$ for every $ 0<s\le1.$\footnote{the $\log$ is in the base $e$ here.}

\medskip 

Now, let \(A\) be an event and apply the Hardy--Littlewood rearrangement inequality to the functions $F\text{ and }
\mathbf 1_A.$ The decreasing rearrangement of \(\mathbf 1_A\) is
\[
(\mathbf 1_A)^*(s)
=
\mathbf 1_{(0,{\mu(A)})}(s).
\]
Thus Hardy--Littlewood inequality~\cite{HardyLittlewoodPolya1952} gives
\[
\int_A F\,d\mu
=
\int F \, \mathbf 1_A\,d\mu
\le
\int_0^1 F^*(s)(\mathbf 1_A)^*(s)\,ds
=
\int_0^{\mu(A)} F^*(s)\,ds.
\]
Using the bound on $F^*$ we finally get,
\[
\int_A F\,d\mu
\le
\frac 1 c \int_0^{\mu(A)} \log\frac{C}{s}\,ds
=
\frac 1 c \left( \mu(A)(\log C +1 ) + \mu(A)\log \frac 1 {\mu(A)} \right)
\le C'{\mu(A)}\left(1+\log\frac1{\mu(A)}\right)
\]
or in other words,
\[
\int_A |M_\theta u|\,d\mu
\le
C'{\mu(A)}\left(1+\log\frac1{\mu(A)}\right).
\]
thus proving the Lemma.
\end{proof}

\medskip

We return to the proof of the theorem. Let
$A:=\{x:f(x)\neq g(x)\}$. Then \(\mu(A)=\mathbb{P} (f \neq g)\), and \(|\varphi|=|f-g|=2\cdot \mathbf 1_A.\) 
Therefore, for every \(u\) with \(|u|\le1\), we have
\[
|\langle \varphi,M_\theta u\rangle|
=
\left|
\int \varphi\,M_\theta u\,d\mu
\right|
\le
2\int_A |M_\theta u|\,d\mu \le C{\mu(A)}\left(1+\log\frac1{\mu(A)}\right)
\]
by the lemma. Applying this with \(u=f,g\), gives (together with the first part of the proof) that
\[
\left|
\sum_i\theta_i(\mu_f(i)-\mu_g(i))
\right|
\le
C{\mu(A)}\left(1+\log\frac1{\mu(A)}\right).
\]
Thus
\[
\|\mu_f-\mu_g\|_{\operatorname{TV}}
\le
C{\mu(A)}\left(1+\log\frac1{\mu(A)}\right)=C\,\mathbb{P}(f\neq g)\left(1+\log\frac{1}{\mathbb{P}(f\neq g)}\right).
\]

completing the proof of the theorem
\end{proof}

\section{Additional proofs}\label{sec:additional proofs}
\begin{proof}[Proof of Proposition~\ref{sqrt is sharp in general}]
Consider the cycle graph \(C_n\), with \(n\) much larger than \(r\). Let \(Z_1,\ldots,Z_n\) be independent random variables taking values \(\pm1\) with probability \(1/2\).

For each \(i\in C_n\), define
\[
Y_i:=\sum_{j=-r}^r (r+1-|j|)Z_{i+j},
\]
where indices are taken modulo \(n\), and put \(X_i:=Y_i/\|Y_i\|_2\). Then \(X_i\) is measurable with respect to \(B_r(i)\), and
\(\mathbb E X_i=0\), \(\mathbb E X_i^2=1\).

\medskip

We have
\[
\|Y_i\|_2^2=\sum_{j=-r}^r (r+1-|j|)^2=(r+1)^2+2\sum_{k=1}^r k^2.
\]
In particular, \(\|Y_i\|_2^2\ge r(r+1)(2r+1)/3\).
In particular (see Figure~1)

\[
\mathbb{E}((X_i-X_{i+1})^2) = \frac 1 {||Y_i||_2^2} \cdot \mathbb{E}(\sum_{j = -r}^{r+1} Z_{j+i}^2 +2\sum_{j,j'} \pm Z_{j+i}Z_{j'+i})
=
\]
\[
=\frac 1 {||Y_i||_2^2} \cdot \mathbb{E}(\sum_{j = -r}^{r+1} Z_{j+i}^2)
=\frac{2r+2}{\|Y_i\|_2^2}
\le \frac{6}{r(2r+1)}
\le \frac{6}{r^2}.
\]

\begin{figure}
    \centering
    \includegraphics[width=0.8\linewidth]{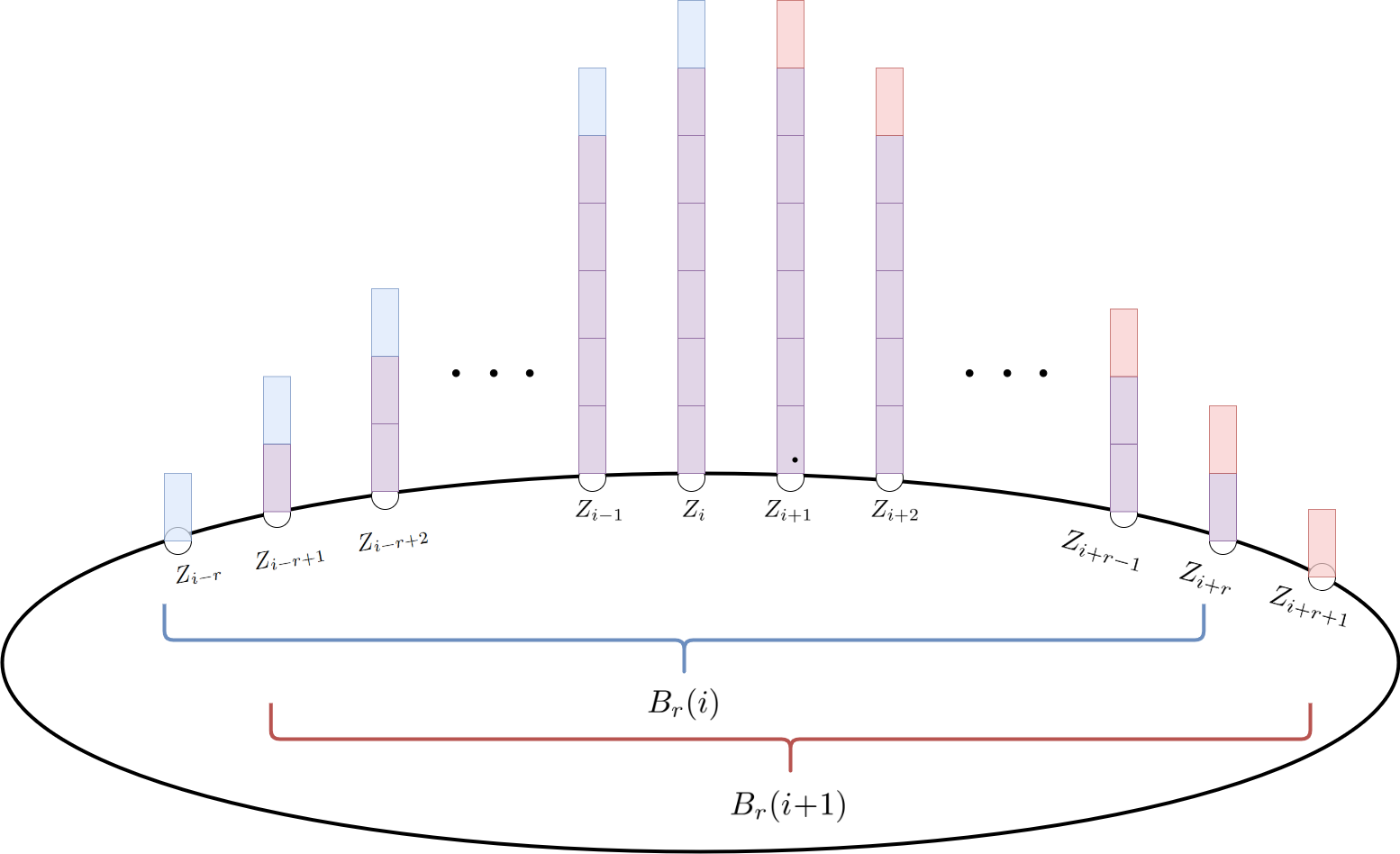}
    \caption{The weights corresponding to the variable $Y_i$ are marked \color{cyan} blue\color{black}. The weights corresponding to the variable $Y_{i+1}$ are marked \color{red}red\color{black}. the overlap is marked \color{violet}purple\color{black}.}
    \label{fig:placeholder}
\end{figure}

Thus the inefficiency satisfies
\[
\varepsilon_r:=\frac1{|E(C_n)|}\sum_{\{i,j\}\in E(C_n)}\frac12\mathbb E[(X_i-X_j)^2]
\le \frac{3}{r^2}.
\]

On the other hand, every subset of \(C_n\) of radius at most \(r\) has size at most \(2r+1\). Hence any partition of \(C_n\) into radius-\(r\) parts has at least \(n/(2r+1)\) parts, and therefore cuts at least order \(n/r\) edges. Since \(|E(C_n)|=n\), the best possible amenability parameter is at least \(c/r\) for some universal \(c>0\), provided \(n\) is sufficiently large compared with \(r\). This proves that the square-root dependence is sharp up to constants.
\end{proof}

\medskip

\begin{proof}[Proof of Proposition~\ref{prop:C(r,Delta)}]
We prove the proposition with $C(r,\Delta)=8\Delta^{8r+4}$. 
Let $G=(N,E)$ of maximal degree at most $\Delta$ and $r>0$ an integer. Let $\pi=(X_i,Z_i)_{i,j\in N}$ a coordination profile adapted to $\mathfrak S(G,r)$.

Denote
\[
c_{ij}:=1-\operatorname{Cov}(X_i,X_j)
\qquad\text{and}\qquad
M:= \max_{i\in N}|B_{2r}(i)|<\Delta^{2r+1}\, .
\]
Set a threshold \(\tau=\frac{1}{C(r,\Delta)}\) 
and delete every edge ${i,j}$ with $1-\operatorname{Cov}(X_i,X_j)=c_{ij}\geq\tau$. Since
\[
\sum_{{i,j}\in E(G)}1-\operatorname{Cov}(X_i,X_j)=\mathtt C(G,\pi)|E|,
\]
at most $C(r,\Delta)\,\mathtt C(G,\pi)|E|$ edges are deleted.

Let $C$ be a connected component of the remaining graph and $G[C]$ the induced subgraph. Fix $i_0\in C$. We claim that $C\subseteq B_{2r}(i_0)$. Otherwise, take a shortest path $\gamma=(i_0,\dots,i_\ell)$ from $i_0$ to $C\setminus B_{2r}(i_0)$. We have $\operatorname{dist}_{G}(i_0,i_\ell)=2r+1$ and $\ell\leq M$. Also, $B_r(i_0)\cap B_r(i_\ell)=\emptyset$, so $X_{i_0}$ and $X_{i_\ell}$ are independent and therefore $\|X_{i_0}-X_{i_\ell}\|_2=\sqrt 2$. On the other hand, 
\[
\sqrt 2 = \|X_{i_0}-X_{i_\ell}\|_2 \le \sum_{j=1}^\ell \| X_{i_j} - X_{i_{j-1}} \|_2 < \ell \sqrt {2 \tau}  < \sqrt 2\, ,
\]

a contradiction. Hence $|C|\leq M$.

Moreover, $\pi_{|C}:=(X_i,Z_j)_{i\in C,j\in N}$ is a coordination profile adapted to $\mathfrak S_C:=(C,N,(B_r)_{|C})$ with discoordination cost  $\mathtt C(G[C],\pi_{|C})< \tau$, since every edge of \(G[C]\) has cost \(1-\operatorname{Cov}(X_i,X_j)<\tau\).
 By \cite[Theorem 3]{TOOsqrtbound2026}, 
there exists a leader coordination profile ${\pi_{\mathrm{leader}}}_{|C}$ adapted to $\mathfrak S_C$ with $\mathtt C(G[C],\pi^{\mathrm{leader}}_C)\leq \sqrt{8\tau}<1/M^{2}$. In $G[C]$, there are at most $\binom{M}{2}<M^2$ edges, so ${\pi_{\mathrm{leader}}}_{|C}$ cannot have any disagreement. Finding such leader profiles one for each connected component and combining them together yields a leader coordination profile $\pi_{\mathrm{leader}}$ adapted to $\mathfrak S(G,r)$ whose only disagreements are on the deleted edges. Therefore, 
\[
\mathtt I_{\mathrm{leader}}(G,r)\leq \mathtt C(G,\pi_{\mathrm{leader}})\leq C(r,\Delta)\,\mathtt C(G,\pi)\, .
\]
Since this holds for every coordination profile adapted to $\mathfrak S(G,r)$, we conclude
\[
\mathtt I_{\mathrm{leader}}(G,r) \le C(r,\Delta)\mathtt I (G,r)\, .
\]
\end{proof}

\end{document}